\documentclass[letterpaper]{article}

\usepackage{amsmath}
\usepackage{amssymb}
\usepackage{amsthm}
\usepackage[square]{natbib}
\usepackage{enumitem}

\newtheorem{theorem}{Theorem}

\newtheorem{lemma}[theorem]{Lemma}
\newtheorem{observation}[theorem]{Observation}
\newtheorem{proposition}[theorem]{Proposition}

\newtheorem{definition}{Definition}[section]

\newcommand{\sproof}{\noindent{\bf Proof.}\hspace*{1em}}

\def\eps{{\varepsilon}}




\def\literalqed{{\ \nolinebreak\hfill\mbox{\quad}}}

\setlength{\oddsidemargin}{0.2cm}
\setlength{\evensidemargin}{-0.4cm}
\setlength{\topmargin}{-2cm}
\setlength{\textwidth}{16cm}
\setlength{\textheight}{23cm}
 \linespread{1.1}

\long\def\symbolfootnote[#1]#2{\begingroup%
\def\thefootnote{\fnsymbol{footnote}}\footnote[#1]{#2}\endgroup} 

    \makeatletter
    \renewcommand\part{%
      \if@openright
        \cleardoublepage
      \else
        \clearpage
      \fi
      \thispagestyle{empty}%
      \if@twocolumn
        \onecolumn
        \@tempswatrue
      \else
        \@tempswafalse
      \fi
      \null\vfil
      \secdef\@part\@spart}
    \makeatother

\def\cite{\citep}

\def\eps{\epsilon}

\newcommand{\ord}[1]{\succ_{#1}}




\newcommand{\ceil}[1]{\left\lceil #1 \right\rceil}

\def\({\left(}
\def\){\right)}

\newcommand{\xqed}{\mbox{\raggedright $\Diamond$}}
\def\xqedhere{\hfill\xqed}

\newcommand{\step}[1]{\stackrel{{\scriptscriptstyle{#1}}}{\rightarrow}}

\widowpenalty = 10000 

\renewcommand{\vec}{\mathbf}

\begin{document}
\title{Random Tie-breaking with Stochastic Dominance}
\author{Reshef Meir\\ \texttt{reshefm@ie.technion.ac.il}}
\maketitle

\begin{abstract}
Consider Plurality with random tie-breaking.
This paper uses standard axiomatic extensions of preferences over elements to preferences over sets (Kelly, Gardenfors, Responsiveness) to characterize all better-replies of a voter under stochastic dominance. 
\end{abstract}

\section{Introduction}
Suppose that a decision maker has to select a subset of alternatives $W\subseteq C$, $|C|=m$. The agent has linear  preferences order   $Q$ over all alternatives $c\in C$.
Denote by $C^{(j)}$ the $j$ most preferred items in $C$ according to $Q$. Lottery $p$ \emph{stochastically dominates} $p'$ according to preference $Q$ if for every $j\leq m$, $Pr_{w\sim p}(w\in C^{(j)}) > Pr_{w\sim p'}(w\in C^{(j)})$.

Here we assume that given a set $W$ of ``possibly winning outcomes'', the actual outcome $w\in W$ is selected from $W$ uniformly at random. Without further information or restrictions on the agent's preferences,  $Q$ can be extended to preferences $\hat Q$ over $2^A$ in various ways. In particular, $Q$ induces a partial preference order $\hat Q$ once we enforce \emph{stochastic dominance}: Each set $W$ determines a lottery $p_W$ over outcomes in $C$ (by our assumption, a uniform lottery over $W$). We say that $X$ \emph{stochastically dominates} $Y$ is $p_X$ stochastically dominates $p_Y$. Intuitively, it means that the agent should prefer $X$ over $Y$ if she believes tie-breaking is going to be selected uniformly at random from the set, regardless of anything else.

To see that SD is only a partial relation, consider the preference $a \succ b \succ c$, and the sets $X=\{b\}, Y=\{a,c\}$.  Whether the agent prefers $X$ or $Y$ may depend on her cardinal utilities. E.g. for utilities $(4,2,1)$ we have $Y \succ X$, whereas for utilities $(4,3,1)$ we have $X \succ Y$. 
More generally, it is known that $X$ stochastically dominates $Y$ if and only if $X \succ Y$ for any cardinal utility scale $u$ consistent with $Q$. 

\subsection{Axioms}
Let $\hat Q$ be the partial order over $2^C$ derived from $Q$ and SD. We would like to find a natural axiomatic characterization of $\hat Q$, i.e. one that uses familiar axioms rather than lotteries and cardinal utilities. 

Here are three axioms that have been suggested in the literature for extending preferences over elements to preferences over subsets.

\begin{tabular}{|l|l|l|}
\hline
Axiom &  Name and reference & Definition \\
\hline
\hline
\textbf{K1}  & Kelly~\cite{kelly1977strategy} &$(\forall a \in X, b\in Y, a \succ  b)\ \Rightarrow\ X \succ  Y$ \\
\textbf{K2}  & ``~~~  `` &     $(\forall a \in X, b\in Y, a \succeq  b)\ \Rightarrow\ X \succeq  Y$ \\
\hline
\textbf{G}	& G\"ardenfors~\cite{gardenfors1976manipulation} & $(\forall b \in X, a \ord i b)\ \Rightarrow\  \{a\} \succ  (\{a\} \cup X) \succ  X$ \\
\hline
\textbf{R} & Responsiveness~\cite{roth1985college} &  $a \ord i b\ \iff\ \forall X\subseteq C\setminus\{a,b\},\ (\{a\} \cup X) \succ  (\{b\} \cup X)$\\
\hline
\end{tabular}

\subsection{Contribution}
If $X$ stochastically dominates $Y$, then this cannot violate any of the Axioms K+G+R. Yet, it is possible that $X$ SD $Y$ but this does not follow from the axioms. 
In the paper we show two results.
\begin{enumerate}
	\item Suppose the winner set $W$ is the outcome of the Plurality rule.
	The axioms~K+G+R characterize $\hat Q$ on all pairs of outcomes $X,Y$ such that a single voter can change the outcome from $X$ to $Y$. That is, the axioms characterize all better-replies in the game defined by the Plurality voting rule over candidates $C$ with uniform random tie breaking and stochastic dominance.
		\item We introduce another axiom called \emph{monotone duplication} (MD) s.t. axioms K+MD+R characterize $\hat Q$ for any pair of outcomes. 
\end{enumerate}

\section{Characterization of Better-Replies in Plurality}

\begin{definition}\label{def:matching} 
Suppose that $X,Y\subseteq C$, $k=|X|\leq |Y|=K$. Sort $X,Y$ in increasing order by $Q$. Let $r_j = \ceil{\frac{j}{k}K}$. 
Partition $Y$ into sets $Y_1,\ldots,Y_k$ s.t. for $j<K$,  $Y_j=\{y_{r_{j-1}+1},\ldots, y_{r_j}\}$ (e.g., if $k=3,K=7$, then $Y$ is partitioned into $Y_1=\{y_1,y_2,y_3\}, Y_2=\{y_4,y_5\}, Y_3= \{y_6,y_7\}$).

$X$ \emph{match-dominates} $Y$ according to $Q$ if:
\begin{itemize}
	\item (I) $\forall j\leq k \forall y\in Y_j$, $x_j \succeq y$; and
	\item either (IIa) at least one relation is strict, or (IIb) $K \mod k\neq 0$.
\end{itemize}
If $|X|>|Y|$, then $X$ \emph{match-dominates} $Y$ if $Y$ match-dominates $X$ according to the reverse of $Q$. 
\end{definition} 
Intuitively, match-domination means that for any $q\in[0,1]$, there is a fraction $q$ of the set $X$ that  dominates a fraction of $1-q$ from the set $Y$: at least one $x\in X$ dominates all of $Y$, at least 20\% of $X$ dominate at least 80\% of $Y$, and so on. 

\begin{lemma}\label{lemma:SD}
Let $\vec a,\vec a'$ be two profiles that differ by a single vote, and define $X=f(\vec a),Y=f(\vec a')$.\footnote{Without some restriction on $X,Y$, the lemma is incorrect. E.g. if $x_1\succ y_1 \succ y_2 \succ x_2 \succ y_3 \succ y_4$, then $X$ stochastically dominates $Y$ but there is no way to derive $X\succ Y$ from the axioms K+G+R.}
 
The following conditions are equivalent for any strict order $Q$ over $C$:
\begin{enumerate}
	\item $X$ stochastically dominates $Y$ under preferences $Q$ and uniform lottery.
	\item The relation $X\succ Y$ is entailed by $Q$, Axioms~K+G+R, and transitivity.
	\item  $u(X)>u(Y)$ for every $u$ that is consistent with $Q$.
	\item  $X$ match-dominates $Y$ according to $Q$. 
\end{enumerate} 	
\end{lemma}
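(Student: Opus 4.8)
The plan is to prove the four-way equivalence by establishing a cycle of implications. The pair $(1)\Leftrightarrow(3)$ is essentially the classical characterization of first-order stochastic dominance already quoted in the introduction (``$X$ SD $Y$ iff $X\succ Y$ for every consistent cardinal scale''), so I would dispatch it in a line, perhaps after unwinding the uniform-lottery definition into the concrete inequalities
\[
\sum_{i\le j}\ind{x_i\in C^{(\ell)}}\cdot\tfrac1k \ \ge\ \sum_{i\le j}\ind{y_i\in C^{(\ell)}}\cdot\tfrac1K
\]
for all cut-levels $\ell$. The implication $(2)\Rightarrow(3)$ is the soundness direction: one checks that each of K1, K2, G, R is valid for the SD order — i.e. whenever the hypothesis of the axiom holds, the conclusion holds for every consistent $u$ — and that validity is preserved under transitivity; this is routine but must be stated. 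So the real content is in the chain $(3)\Rightarrow(4)\Rightarrow(2)$, i.e. turning a cardinal-utility statement into the combinatorial match-domination condition, and then deriving match-domination purely from the axioms.

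\textbf{From (3) to (4).} Here I would argue contrapositively. Suppose $X$ does not match-dominate $Y$; I want a consistent $u$ with $u(X)\le u(Y)$. Assuming WLOG $k\le K$, failure of match-domination means either some block $Y_j$ contains an element $y$ with $y\succ x_j$, or all block inequalities are weak equalities and $k\mid K$ (so each block has exactly $K/k$ elements and $x_j$ ties the top of $Y_j$). In the first case I place a sharp ``step'' utility that is high on $C^{(\ell)}$ for the cut $\ell$ separating $x_j$ from $y$ and near-zero below, so that $Y$ collects at least $r_j/K = \lceil (j/k)K\rceil /K \ge j/k$ mass above the step while $X$ collects only $(j-1)/k$ — wait, one must be careful with the ceiling, which is exactly why the $r_j=\lceil (j/k)K\rceil$ bookkeeping in Definition~\ref{def:matching} matters; the point of condition (IIb) is that when $k\nmid K$ the ceilings give $X$ strictly less mass, and when $k\mid K$ one needs (IIa) for the same strictness. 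In the equality case I use a generic strictly-increasing $u$ and compute $u(X)=\frac1k\sum u(x_j)$ versus $u(Y)=\frac1K\sum u(y)$, grouping $Y$ by blocks: block $j$ contributes $\frac1K\sum_{y\in Y_j}u(y)\ge \frac{|Y_j|}{K}u(x_j)$ — no, that inequality goes the wrong way, so instead I compare term by term using $x_j\succeq \max Y_j$ and conclude $u(Y)\ge u(X)$ with equality throughout, contradicting strictness. This step is where I expect the main obstacle to lie: handling the boundary arithmetic of the blocks and the two sub-cases (IIa)/(IIb) cleanly, and in particular verifying that the footnoted counterexample (which is not a single-vote change) is genuinely excluded — the single-vote restriction on $X,Y$ should be used to guarantee $\bigl||X|-|Y|\bigr|\le 1$ or some comparable structural constraint (under Plurality a single voter can shift one candidate's count by one, so the winning set changes in a very controlled way), and I would isolate that as a preliminary observation before the main argument.

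\textbf{From (4) to (2).} Finally I would show that match-domination is derivable from K+G+R plus transitivity, by induction on $|X|+|Y|$, or more transparently by exhibiting an explicit chain $X=Z_0\succ Z_1\succ\cdots\succ Z_t=Y$ where each step is a single application of one axiom. The idea: because $x_j\succeq y$ for all $y\in Y_j$, I can ``grow'' $x_j$ into the block $Y_j$ using G (adding dominated elements: $\{x_j\}\succeq\{x_j\}\cup Y_j\succeq$ and then, if needed, R to splice these blocks together while holding the rest fixed), and use K1/K2 at the one place where a strict relation is available (case IIa), or exploit the size mismatch (case IIb) — noting that $\{a\}\succ\{a,b,\dots\}$ from G is already strict, which supplies the needed strictness when $k\nmid K$. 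Assembling the per-block manipulations into one transitive chain, with R ensuring that modifying block $j$ does not disturb the already-processed blocks, completes the implication and closes the cycle $(1)\Rightarrow(3)\Rightarrow(4)\Rightarrow(2)\Rightarrow(3)\Rightarrow(1)$, with $(1)\Leftrightarrow(3)$ from the classical fact.
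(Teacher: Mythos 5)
Your treatment of (1)$\Leftrightarrow$(3), (2)$\Rightarrow$(3) and (3)$\Rightarrow$(4) matches the paper's in substance: a one-line appeal to the classical SD/expected-utility equivalence, a soundness check of each axiom against every consistent $u$, and a contrapositive step-utility construction for (3)$\Rightarrow$(4) (the paper takes $u=1$ on $X'=\{x: x\succ x_{j'}\}$ and on $Y'=\{y: y\succeq y'\}$, $0$ elsewhere, plus a direct equality computation for the all-ties case with $k\mid K$). One correction of emphasis: (3)$\Rightarrow$(4) holds for \emph{arbitrary} sets --- the paper proves it without any reference to the profiles --- so the single-vote restriction you attach to that step is misplaced; the footnoted counterexample satisfies (1), (3) and (4) simultaneously, and what it refutes is the passage to (2).

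The genuine gap is therefore in (4)$\Rightarrow$(2). The chain you propose --- ``grow $x_j$ into the block $Y_j$ using G, then use R to splice these blocks together while holding the rest fixed'' --- is not licensed by the axioms: G only compares $\{a\}$, $\{a\}\cup X$ and $X$ with no surrounding context, and R only exchanges a single element for a single element inside a fixed context; neither permits replacing a multi-element block $Y_j$ by $\{x_j\}$ inside a larger set. No repair is possible in general, since the footnoted example ($X=\{x_1,x_2\}$, $Y=\{y_1,\dots,y_4\}$ with $x_1\succ y_1\succ y_2\succ x_2\succ y_3\succ y_4$) match-dominates yet provably cannot be derived from K+G+R. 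The single-vote hypothesis must be used precisely here: under Plurality a single vote change forces $X$ and $Y$ to be equal, to differ by one added/removed/swapped candidate, or one of them to be a singleton. The paper's proof of (4)$\Rightarrow$(2) is a case analysis over these configurations, each resolved by a short explicit chain: K2 followed by G when $|X|=1$; the sequence of single-element R-swaps $X^t=\{x_1,\dots,x_t,y_{t+1},\dots,y_k\}$ when $|X|=|Y|$; and R-swaps down to the top $k$ elements of $Y$ followed by one G-step removing $\min Y$ when $|Y|=|X|+1$. Your sketch needs to be rebuilt around that case analysis rather than a general block-replacement argument.
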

\begin{proof}
The equivalence of (1) and (3) is immediate for any sets $X,Y$, and used e.g. in \cite{reyhani2012best}.

(2) $\Rightarrow$ (3). If $X\succ Y$ follows from the axioms, then there is a sequence of sets $X=X_0\succ X_1 \succ \cdots \succ X_k =Y$ such that each $X_j \succ X_{j+1}$ follows from a single axiom K,G, or R. Thus it is sufficient to show for $X\succ Y$ that follows from a single axiom. 

If $X\succ Y$ follows from Axiom~R, then $X=\{a\}\cup W,Y=\{b\}\cup W$ for some $W\subseteq C\setminus\{a,b\}$ and $a\succ b$. Thus
$$u(X) = u(\{a\} \cup W) = \frac{1}{|W|+1|}\(u(a) + \sum_{c\in W}u(c)\) > \frac{1}{|W|+1|}\(u(b) + \sum_{c\in W}u(c)\) = u(\{b\}\cup W)=u(Y).$$

If $X\succ Y$ follows from Axiom~G, then either $X=Y\cup \{a\}$ and $a\succ b$ for all $b\in Y$, or $X=\{x\}$ and $Y=\{x\}\cup W$ where $x\succ w$ for all $w\in W$. For the first case
\begin{align*}
u(X) &=\frac{1}{|Y|+1|}u(a) + \frac{1}{|Y|+1}\sum_{y\in Y}u(y) =\frac{1}{|Y|+1}\frac{1}{|Y|}\sum_{y\in Y}u(a) +   \frac{1}{|Y|+1}\sum_{y\in Y}u(y)\\
&> \frac{1}{|Y|+1|}\frac{1}{|Y|}\sum_{y\in Y}u(y) +   \frac{1}{|Y|+1|}\sum_{y\in Y}u(y)\\
&=\(1+\frac{1}{|Y|}\) \frac{1}{|Y|+1}\sum_{y\in Y}u(y) =  \frac{1}{|Y|}\sum_{y\in Y}u(y) = u(Y).
\end{align*}
For the second case, 
$$u(X) = u(x) = \frac{1}{|Y|}\sum_{y\in Y}u(x) = \frac{1}{|Y|}\(u(x) +\sum_{w\in W}u(x)\) >  \frac{1}{|Y|}\(u(x) +\sum_{w\in W}u(w)\) = u(Y).$$
 
If $X\succ Y$ follows from Axiom~K, then $u(x)>u(y)$ for any $x\in X,y\in Y$ which is a trivial case.


(3) $\Rightarrow$ (4). 
Suppose that $u(X)>u(Y)$ for all $u$. Suppose first $|X|\leq |Y|$. If $|X|$ does not match-dominate $Y$ then either (I) there  is an element $x_{j'}$ that is less preferred than some element $y'\in Y_{j'}$; or (II) for all $j$ and all $y\in Y_j$, $x_j=_Q y$ and $|Y_j|=\frac{K}{k}=q$ for all $j$. We will derive a contradiction to (3) in either case. In the latter case, we have $u(x_j)=u(Y_j)$ for all $j$ and thus 
$$u(Y)=\frac{1}{K}\(\sum_{j\leq k}|Y_j|u(Y_j) \)=\frac{\sum_{j\leq k}q u(x_j) }{K} = \frac{\sum_{j\leq k}q u(x_j)}{kq} =  u(X),$$
In contradiction to (3).

Thus we are left with case (I). That is, there are $j' \leq k$ and $y'\in Y_{j'}$ s.t. $x_{j'} \prec y'$.  We define the (possibly empty) set $X'\subseteq X$ as all elements $\{x: x\succ x_{j'}\}$. We define $Y'\subseteq Y$ as $\{y: y\succeq y'\}$. By construction, for any $j>j'$, $Y_j\subseteq Y'$. Thus
 $$|Y'|\geq 1+\sum_{j=j'+1}^k|Y_j|=1+\sum_{j=j'+1}^k(r_j-r_{j-1})=(K-r_{j'})+1 = (K-\ceil{\frac{j'}{k}K})+1> K-\frac{j'}{k}K=K(1-\frac{j'}{k}),$$
whereas $|X'|\leq k-j'$. 
We define $u$ as follows: $u(x)=1,u(y)=1$ for all $x\in X',y\in Y$, and $u(z)=0$ for all other elements. Note that $X',Y'$ contain the top elements of $X,Y$, respectively. In addition, $y'$ is the minimal element in $Y'$ and by transitivity $y'\succ x$ for all $x\in X\setminus X'$. Thus $u$ is consistent with $Q$.\footnote{If we want $u$ to respect the strict order $Q$, we can vary the cardinal preferences within each set $X',Y',X\setminus X',Y\setminus Y'$ s.t. the differences within each set are less than $\eps$. For sufficiently small $\eps$ (say, $\eps<\frac{1}{k^2}$) the proof goes through.} 

We argue that $u(Y)>u(X)$ in contradiction to (3). Indeed, $u(X) = \frac{|X'|}{|X|}\leq \frac{k-j'}{k}=1-\frac{j'}{k}$. 
$$u(Y) = \frac{|Y'|}{|Y|} > \frac{(1-\frac{j'}{k})K}{K} = 1-\frac{j'}{k} = \frac{k-j'}{k} \geq \frac{|X'|}{|X|}= u(X),$$
so we get a contradiction to (3) again.  
Thus $X$ matching-dominate $Y$.

(4) $\Rightarrow$ (2).
This is the only part of the proof where we use the profiles from which $X,Y$ are obtained.
When a single voter moves, either the winner set changes by a single candidate (added, removed, or swapped), or $X$ is a single candidate, or $Y$ is a single candidate. We prove case by case.
\begin{itemize}
\item The case where $|X|=|Y|=1$ is immediate. 
	\item Suppose $|X|=1$ (i.e. $X=\{x\}$) and $|Y|=K>1$. Then $X$ match-dominates $Y$ means that $x \succeq y$ for all $y\in Y$, with at least one relation being strict, w.l.o.g. $y_K$ (least preferred in $Y$). Then $X \succeq \{y_1,\ldots,y_{K-1}\} \succ Y$, where the first transition is by Axiom~K2 and the second is by Axiom~G.
	\item The case of $|Y|=1$ is symmetric.
	\item Suppose $|X|=|Y|=k$. Then  $X$ match-dominates $Y$ means that $x \succeq y $ for all $i$. For all $t\in\{0,1,\ldots,k\}$, let $X^t=\{x_1,\ldots,x_t,y_{t+1},\ldots,y_k\}$. Then $X^{t-1} = X^{t}$ if $x_t=y_t$, and $X^{t-1} \succ X^{t}$ otherwise from Axiom~R. In addition, $X=X^0,Y=X^k$ thus $X\succ Y$ from transitivity. 
	\item Suppose $|X|=k,|Y|=k+1$. Then $X$ match-dominates $Y$ means that $|Y_1|=\ceil{\frac{k+1}{k}}=2$, and all other sets $Y_j$ are singletons $Y_j=y_j$. Consider the set $Y'$ that includes the top $k$ elements of $Y$. Since $x_1$ is (weakly) preferred to both candidates in $Y_1$, $Y'$ is match-dominated by $X$. By the previous bullet $X \succeq Y'$ follows from Axiom~R and transitivity. Finally, $Y'\succ Y=Y'\cup\{\min Y\}$ by Axiom~G.
\end{itemize}
\end{proof}

The following is an immediate corollary:

\begin{proposition}\label{th:SD_axioms}
 A step $\vec a\step i \vec a'$ is a better-response under random tie-breaking and stochastic dominance, if and only if $f(\vec a') \succ  f(\vec a)$ is entailed by $Q $, the Axioms~K+G+R, and transitivity.
\end{proposition}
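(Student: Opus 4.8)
The plan is to obtain this directly from Lemma~\ref{lemma:SD}, after unpacking what ``better-response'' means in the game at hand. In the game defined by Plurality with uniform random tie-breaking, a strategy profile is a vote vector $\vec a$, the outcome $f(\vec a)$ is the winner set, the realized winner is drawn uniformly from $f(\vec a)$, and voter $i$ (with true linear order $Q$) compares outcomes via the partial order $\hat Q$ induced by stochastic dominance of the associated uniform lotteries. Hence $\vec a \step i \vec a'$ is a better-response for $i$ iff $\vec a'$ differs from $\vec a$ only in $i$'s coordinate and $f(\vec a')$ stochastically dominates $f(\vec a)$ under $Q$ and the uniform lottery. This is just the definition of a profitable unilateral deviation when preferences over outcomes are given by $\hat Q$.

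Next I would apply Lemma~\ref{lemma:SD} to the pair of profiles $(\vec a', \vec a)$: they differ by a single vote, as required, and $f$ is the Plurality rule, so the hypothesis of the lemma is met. (The condition ``differ by a single vote'' is symmetric in the two profiles, so the order in which they are listed is immaterial; this is the only bookkeeping point, since the lemma's labeling and the proposition's desired direction $f(\vec a')\succ f(\vec a)$ are transposed.) Setting $X = f(\vec a')$ and $Y = f(\vec a)$, condition~(1) of the lemma becomes exactly ``the step $\vec a \step i \vec a'$ is a better-response'', while condition~(2) becomes exactly ``$f(\vec a') \succ f(\vec a)$ is entailed by $Q$, Axioms~K+G+R, and transitivity''. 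The equivalence (1)$\Leftrightarrow$(2) supplied by the lemma is then precisely the claimed biconditional, and I would simply state this and close.

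I do not expect any real obstacle: all the substance lives in Lemma~\ref{lemma:SD}, and in particular in its direction (4)$\Rightarrow$(2), which is the only place where the single-vote structure of $(\vec a,\vec a')$ is used (so that the winner set changes by at most one candidate, or one of $X,Y$ is a singleton). The single point I would make explicit is that Lemma~\ref{lemma:SD} is stated for a strict order $Q$, so the proposition inherits the standing assumption from the introduction that each voter's preference is a linear order over $C$; under that assumption the statement is an immediate corollary.
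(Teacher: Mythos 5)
Your proposal is correct and is exactly the argument the paper intends: the paper presents the proposition as an immediate corollary of Lemma~\ref{lemma:SD}, obtained by identifying a better-response with stochastic dominance of $f(\vec a')$ over $f(\vec a)$ and invoking the equivalence (1)$\Leftrightarrow$(2). Your explicit handling of the relabeling $X=f(\vec a')$, $Y=f(\vec a)$ and the symmetry of ``differ by a single vote'' is sound bookkeeping that the paper leaves implicit.
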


\section{A Full Characterization of Stochastic Dominance Preferences}
Lemma~\ref{lemma:SD} provides an axiomatic characterization for any pair of subsets that are the result of a single voter move (under Plurality). What if we want to characterize all pairwise relations? 
To that end, we need another axiom. In addition, the set-extension of $Q$ applies to multisets and not just to sets. 

\begin{definition}
Consider a preference order $Q$ over a set $C$. 
A partial extension $\hat Q$ to multisets (that are subsets of $C$) respects \emph{Monotone Duplication} (\textbf{MD}) if the following holds for any $X=\{x_1,\ldots,x_k\}$ in non-decreasing order according to $Q$, and $Y$ is a multiset where each $x_j$ appears $h_j\geq 0$ times:
\begin{enumerate}
	\item If $(h_j)_{j\leq k}$ is non-decreasing then $Y \succeq_{\hat Q} X$;
	\item If $(h_j)_{j\leq k}$ is non-increasing then $Y \preceq_{\hat Q} X$;
	\item  If $(h_j)_{j\leq k}$ are not all equal, then $Y \neq_{\hat Q} X$ (preference is strict). 
\end{enumerate}
\end{definition}

Since we now allow multisets, we require all Axioms to apply for weak preferences as well. 

\begin{tabular}{|l|l|}
\hline
Axiom &   Definition \\
\hline
\hline
\textbf{K1}   &$(\forall a \in X, b\in Y, a \succ  b)\ \Rightarrow\ X \succ Y$ \\
\textbf{K2}  &     $(\forall a \in X, b\in Y, a \succeq  b)\ \Rightarrow\ X \succeq  Y$ \\
\hline
\textbf{G1}	& $(\forall b \in X, a \ord i b)\ \Rightarrow\  \{a\} \succ  (\{a\} \cup X) \succ  X$ \\
 \textbf{G2} & $(\forall b \in X, a \succeq  b)\ \Rightarrow\  \{a\} \succeq  (\{a\} \cup X) \succeq  X$ \\
\hline
\textbf{R1} &  $a \succ  b\ \iff\ \forall X\subseteq C\setminus\{a,b\},\ (\{a\} \cup X) \succ  (\{b\} \cup X)$\\
\textbf{R2} &  $a \succeq  b\ \iff\ \forall X\subseteq C\setminus\{a,b\},\ (\{a\} \cup X) \succeq  (\{b\} \cup X)$\\

\hline
\end{tabular}

\begin{observation}
Axiom~MD entails Axiom~G.
\end{observation}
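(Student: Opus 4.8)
The plan is to realize the three sets $\{a\}$, $\{a\}\cup X$ and $X$ all as duplications of one and the same base set, and then simply read the chain $\{a\}\succ(\{a\}\cup X)\succ X$ off the three clauses of \textbf{MD}. Write $X=\{x_1,\dots,x_k\}$ in $Q$-increasing order. The hypothesis of \textbf{G} says $a\succ b$ for every $b\in X$; in particular $a\notin X$ and $a$ is the $Q$-maximum of $W:=X\cup\{a\}$, so listing $W$ in $Q$-increasing order gives $x_1\prec\cdots\prec x_k\prec a$, a set of size $k+1\ge 2$. Moreover $\{a\}\cup X$ is literally $W$, i.e.\ the multiset over $W$ with multiplicity vector $(1,\dots,1)$.

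Now I would apply \textbf{MD} three times, with the role of the base set played by $W$. First, $\{a\}$ is the multiset over $W$ with multiplicity vector $(h_1,\dots,h_{k+1})=(0,\dots,0,1)$, which is non-decreasing and not all equal; clauses (1) and (3) of \textbf{MD} then give $\{a\}\succ W=\{a\}\cup X$. Second, $X$ is the multiset over $W$ with multiplicity vector $(1,\dots,1,0)$, which is non-increasing and not all equal; clauses (2) and (3) give $X\prec W=\{a\}\cup X$. Chaining these two strict relations yields exactly $\{a\}\succ(\{a\}\cup X)\succ X$, which is \textbf{G}; its weak form $\{a\}\succeq(\{a\}\cup X)\succeq X$ follows \emph{a fortiori} (and for a strict order $Q$ with $a\notin X$ the weak hypothesis ``$a\succeq b$ for all $b\in X$'' coincides with the strict one, so the same computation covers the weak variant).

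The argument is very short, so the only thing that needs care is choosing the right base set: the entire content is the observation that a singleton placing all the mass on the best candidate carries the ``most increasing'' profile $(0,\dots,0,1)$, while deleting that candidate leaves the ``most decreasing'' profile $(1,\dots,1,0)$, so both land on opposite sides of the pivot $W=\{a\}\cup X$. Two mild points must be checked in passing: that $W$ really can be $Q$-sorted with $a$ on top (guaranteed by the hypothesis of \textbf{G}) and that $X\neq\emptyset$ (implicit in the statement of \textbf{G}, and needed so that the multiplicity vectors above are not all equal). I do not expect a genuine obstacle. The only place where extra work would be required is if one tried to push this one-pivot argument through for a reading of \textbf{G} in which $X$ is a genuine multiset, or $a$ ties an element of $X$: then $\{a\}\cup X$ is no longer the set $W$, and $\{a\}$ and $\{a\}\cup X$ can sit on the same side of every natural pivot, so one would have to introduce an auxiliary base set together with transitivity — but this situation does not arise for \textbf{G} as stated.
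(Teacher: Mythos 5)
Your proof is correct and is essentially the paper's own argument: both take the base set $\{a\}\cup X$ sorted with $a$ on top, realize $\{a\}$ via the multiplicity vector $(0,\dots,0,1)$ and $X$ via $(1,\dots,1,0)$, and invoke MD1+MD3 and MD2+MD3 respectively to get the two strict comparisons in \textbf{G}. Your added remarks on the weak variant and on $X\neq\emptyset$ are fine but do not change the substance.
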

To see why, note that given a set $A$ and $b$ s.t. $b\succ a$ for all $a\in A$, we can define $X=A\cup\{b\},Y_1=\{b\}$, then $Y_1$ is obtained from $X$ by duplication with $h_1=\cdots=c_{|A|-1}=0,h_{|A|}=1$. Thus by Axiom~MD1+MD3 we get $\{b\} \succ \{b\} \cup A$. Similarly, for $Y_2= A$ we get for $h_1=\cdots=c_{|A|-1}=1,h_{|A|}=0$ and Axioms~D2+D3 that $\{b\} \cup A \succ A$.  

\begin{theorem}
\label{th:SD_all}
Let $X,Y \subseteq C$.
The following conditions are equivalent for any strict order $Q$ over $C$:
\begin{enumerate}
	\item $X$ stochastically dominates $Y$ under preferences $Q$ and uniform lottery.
	\item The relation $X\succ Y$ is entailed by $Q$, Axioms~K+MD+R, and transitivity.
	\item  $u(X)>u(Y)$ for every $u$ that is consistent with $Q$.
	\item  $X$ match-dominates $Y$ according to $Q$. 
\end{enumerate} 	

\end{theorem}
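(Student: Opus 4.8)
My plan is to reuse as much of the proof of Lemma~\ref{lemma:SD} as possible. The equivalence of (1) and (3), and the implication (3)~$\Rightarrow$~(4), were proved there for \emph{arbitrary} sets $X,Y$; the single-vote hypothesis was invoked only in the step (4)~$\Rightarrow$~(2). So the work reduces to two things: re-proving (2)~$\Rightarrow$~(3) with the axiom set K+MD+R in place of K+G+R, and proving (4)~$\Rightarrow$~(2) without any structural assumption on $X,Y$. Closing the cycle $(1)\Leftrightarrow(3)\Rightarrow(4)\Rightarrow(2)\Rightarrow(3)$ then gives the theorem.

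For (2)~$\Rightarrow$~(3) I would argue exactly as in Lemma~\ref{lemma:SD}: it suffices to check that one application of any axiom among K, R, MD never increases $u(\cdot)$ of the smaller side, strictly so when the axiom is used in its strict form, and then a derivation chain $X=Z_0,\dots,Z_N=Y$ with at least one strict link yields $u(X)>u(Y)$. The cases K and R are verbatim from Lemma~\ref{lemma:SD}. For MD, if $Y$ is a multiset in which $x_j$ (the $j$-th smallest element of $X$) occurs $h_j$ times, then $u(Y)=\frac{\sum_j h_j u(x_j)}{\sum_j h_j}$ while $u(X)=\frac1k\sum_j u(x_j)$; since $u$ respects $Q$ we have $u(x_1)<\dots<u(x_k)$, and Chebyshev's sum inequality says this weighted mean is $\ge$ (resp.\ $\le$) the uniform mean when $(h_j)$ is non-decreasing (resp.\ non-increasing), with strict inequality as soon as $(h_j)$ is not constant. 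This is precisely what MD1--MD3 assert.

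For (4)~$\Rightarrow$~(2), the heart of the theorem, I would assume WLOG $k=|X|\le|Y|=K$ (the case $k>K$ reduces to it by reversing $Q$, under which stochastic dominance, match-domination, and all the axioms --- with MD1 and MD2 swapping roles --- behave symmetrically). The key device is a blow-up to common size: let $\tilde X$ be the multiset with each $x_j$ repeated $K$ times and $\tilde Y$ the multiset with each $y_i$ repeated $k$ times, both of size $kK$. Their multiplicity vectors are constant, so MD1+MD2 give $\tilde X\sim X$ and $\tilde Y\sim Y$. Now the $t$-th smallest element of $\tilde X$ is $x_{\lceil t/K\rceil}$ and that of $\tilde Y$ is $y_{\lceil t/k\rceil}$, and I would show these sorted sequences satisfy positionwise domination: for every $t$, $x_{\lceil t/K\rceil}\succeq y_{\lceil t/k\rceil}$. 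Indeed, writing $j=\lceil t/K\rceil$ one has $\lceil t/k\rceil\le\lceil jK/k\rceil=r_j$, so $y_{\lceil t/k\rceil}$ lies in some $Y_{j'}$ with $j'\le j$, whence $x_j\succeq x_{j'}\succeq y_{\lceil t/k\rceil}$ by condition~(I) and transitivity. Moreover, because $Q$ is strict and $X,Y$ are genuine sets, condition~(IIa) must hold --- an all-equalities instance of~(I) would force every $|Y_j|=1$, hence $K=k$, contradicting~(IIb) --- so there are indices $j^\ast$ and $y_{i^\ast}\in Y_{j^\ast}$ with $x_{j^\ast}\succ y_{i^\ast}$, and one checks the two half-open index intervals $\bigl((j^\ast-1)K,\,j^\ast K\bigr]$ and $\bigl((i^\ast-1)k,\,i^\ast k\bigr]$ overlap, pinning down a position $t^\ast$ where the domination is strict. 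Finally I would transform $\tilde Y$ into $\tilde X$ one sorted position at a time: at step $t$ replace the occurrence of $y_{\lceil t/k\rceil}$ by $x_{\lceil t/K\rceil}$ inside the common sub-multiset, which is an R2-step (an R1-step at $t=t^\ast$); transitivity then gives $\tilde X\succ\tilde Y$, and hence $X\sim\tilde X\succ\tilde Y\sim Y$.

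The step I expect to be fiddliest is this last alignment: verifying carefully that the positionwise domination holds with the ceilings as claimed, that the strict position $t^\ast$ genuinely exists (the interval-overlap computation), and that the one-position-at-a-time rewriting is a legitimate sequence of R-moves when the ambient objects are multisets rather than sets. The rest is bookkeeping.
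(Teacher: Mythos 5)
Your proposal is correct, and for the crucial step (4)$\Rightarrow$(2) it takes a genuinely different route from the paper. The paper builds a single intermediate multiset $Y'$ of size $K$ by replacing each $y\in Y_j$ with a copy of $x_j$ and then invokes Axiom~MD2 to compare $Y'$ with $X$; this requires the multiplicity vector $h_j=|Y_j|=\lceil jK/k\rceil-\lceil (j-1)K/k\rceil$ to be monotone, which can in fact fail (e.g.\ $k=5$, $K=7$ gives $(2,1,2,1,1)$). Your common-refinement blow-up to size $kK$ --- where MD is used only with constant multiplicities, i.e.\ as a pure indifference, and all of the comparison work is carried by positionwise R-steps --- is therefore not just different but more robust; your interval-overlap argument producing the strict position $t^\ast$ and your observation that (IIa) must hold whenever $X,Y$ are genuine sets under a strict order both check out. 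You also treat (2)$\Rightarrow$(3) more carefully than the paper does: enlarging the axiom set from K+G+R to K+MD+R makes condition (2) \emph{easier} to satisfy, so the soundness of MD with respect to expected utility must be verified afresh, and your Chebyshev-sum-inequality computation does exactly that, whereas the paper's remark that MD is stronger than G points in the wrong direction for this implication. The one shared informality is the application of Axiom~R to multisets whose remaining part may still contain copies of the two swapped elements, which goes beyond the literal requirement $X\subseteq C\setminus\{a,b\}$; you flag this explicitly, and the paper's own chain of replacements inside a block $Y_j$ with $|Y_j|>1$ has the same issue, so it is a question of how the multiset extension of R is formalized rather than a defect of your argument.
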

\begin{proof}
Lemma~\ref{lemma:SD} already shows (1)$\iff$(3) and (2)$\Rightarrow$(3)$\Rightarrow$(4) for Axioms~K+G+R, without any further condition. Since Axiom~MD is stronger than G these entailments still hold for Axioms~K+MD+R.  It remains to show that (4)$\Rightarrow$(2) under these axioms.  

We prove for the case where $|X|\leq |Y|$. We start with $Y$ and generate a multisets $Y'$ by for each $j\leq k$ replacing each $y\in Y_j$ with $|Y_j|$ copies of $x_j$. Since $x_j \succeq y$ for each such replacement, we get $Y' \succeq Y$ by (weak) Axiom~R and transitivity.

Then by Axiom~MD2  we have that $X\succeq Y'$, since for every $j<j'\leq k$, $Y'$ contains (weakly) more copies of $x_j$ than of $x_{j'}$. Therefore $X\succeq Y'\succeq Y$. 

To show that the preference is strict, we consider the two cases of Def.~\ref{def:matching}: either (IIa) at least one relation $x_j\succ y$ is strict, or (IIb) $K \mod k\neq 0$.  In case (IIa), we get that $Y'\succ Y$ by strict Axiom~R. In Case (IIb), we get that the sizes $h_j$ in $Y'$ are not all the same, and thus by Axioms~MD2+MD3 $X\succ Y'$.

In either case, $X\succ Y$ so we are done.
\end{proof}

\bibliographystyle{named}
\bibliography{abbshort,plurality.aaai,ultimate} 

\begin{thebibliography}{}

\bibitem[\protect\citeauthoryear{G{\"a}rdenfors}{1976}]{gardenfors1976manipulation}
Peter G{\"a}rdenfors.
\newblock Manipulation of social choice functions.
\newblock {\em Journal of Economic Theory}, 13(2):217--228, 1976.

\bibitem[\protect\citeauthoryear{Kelly}{1977}]{kelly1977strategy}
Jerry~S Kelly.
\newblock Strategy-proofness and social choice functions without
  singlevaluedness.
\newblock {\em Econometrica: Journal of the Econometric Society}, pages
  439--446, 1977.

\bibitem[\protect\citeauthoryear{Reyhani and Wilson}{2012}]{reyhani2012best}
Reyhaneh Reyhani and Mark~C Wilson.
\newblock Best-reply dynamics for scoring rules.
\newblock In {\em 20th European Conference on Artificial Intelligence}. IOS
  Press, 2012.

\bibitem[\protect\citeauthoryear{Roth}{1985}]{roth1985college}
Alvin~E Roth.
\newblock The college admissions problem is not equivalent to the marriage
  problem.
\newblock {\em Journal of economic Theory}, 36(2):277--288, 1985.

\end{thebibliography}
\end{document}